\title{Efficient streaming dynamic mode decomposition}
\author{Aditya Kale, \IEEEmembership{Student Member, IEEE,} Marcos Netto, \IEEEmembership{Senior Member, IEEE,} Xinyang Zhou, \IEEEmembership{Member, IEEE}
\thanks{This work was authored in part by the National Renewable Energy Laboratory, operated by Alliance for Sustainable Energy, LLC, for the U.S. Department of Energy (DOE) under contract no. DE-AC36-08GO28308. The views expressed in the article do not necessarily represent the views of the DOE or the U.S. Government. The U.S. Government and the publisher, by accepting the article for publication, acknowledge that the U.S. Government retains a nonexclusive, paid-up, irrevocable, worldwide license to publish or reproduce the published form of this work, or allow others to do so, for U.S. Government purposes.}
\thanks{This work was partially supported by the Laboratory Directed Research and Development program at the National Renewable Energy Laboratory and by the National Science Foundation under Grant 2328241.}
\thanks{A. Kale and M. Netto are with the Department of Electrical and Computer Engineering, New Jersey Institute of Technology, Newark, NJ 07102, USA. X. Zhou is with the Power Systems Engineering Center, National Renewable Energy Laboratory, Golden, CO 80401, USA.}
}
\newcommand{\norm}[1]{\left\lVert#1\right\rVert}
\newcommand{\xk}{\bm{x}_{k}}
\newcommand{\yk}{\bm{y}_{k}}
\newcommand{\xkt}{\tilde{\bm{x}}_{k}}
\newcommand{\ykt}{\tilde{\bm{y}}_{k}}
\newcommand{\xtm}{\tilde{\bm{x}}_}
\newcommand{\ytm}{\tilde{\bm{y}}_}
\newcommand{\St}{\tilde{\bm{A}}}
\newcommand{\Skt}{\tilde{\bm{A}}_{k}}
\newcommand{\Qk}{\bm{Q}_{k}}
\newcommand{\Qm}{\bm{Q}_}
\newcommand{\T}{^\top}
\newcommand{\Xkt}{\tilde{\bm{X}}_{k}}
\newcommand{\Ykt}{\tilde{\bm{Y}}_{k}}
\newcommand{\A}{\bm{A}}
\newcommand{\Xk}{\bm{X}_{k}}
\newcommand{\Yk}{\bm{Y}_{k}}
\newcommand{\Xm}{\bm{X}_}
\newcommand{\Ym}{\bm{Y}_}
\newcommand{\xm}{\bm{x}_}
\newcommand{\ym}{\bm{y}_}
\newcommand{\C}{\bm{C}}
\newcommand{\GXm}[1]{\bm{G}_{\bm{X}_{#1}}}
\newcommand{\GYm}[1]{\bm{G}_{\bm{Y}_{#1}}}
\newcommand{\Cm}{\bm{C}_}
\newcommand{\Ck}{\bm{C}_{k}}
\newcommand{\GXk}{\bm{G}_{\Xk}}
\newcommand{\GYk}{\bm{G}_{\Yk}}
\newtheorem{theorem}{Theorem}
\newtheorem{remark}{Remark}
\begin{document}

\maketitle
\thispagestyle{empty}
\pagestyle{empty}

\begin{abstract}
We propose a reformulation of the streaming dynamic mode decomposition method that requires maintaining a single orthonormal basis, thereby reducing computational redundancy. The proposed efficient streaming dynamic mode decomposition method results in a constant-factor reduction in computational complexity and memory storage requirements. Numerical experiments on representative canonical dynamical systems show that the enhanced computational efficiency does not compromise the accuracy of the proposed method.

\end{abstract}

\begin{IEEEkeywords}
Dynamic mode decomposition, Koopman operator, modal analysis, nonlinear dynamical systems, stream processing.
\end{IEEEkeywords}

\section{Introduction} \label{sec:intro}
\IEEEPARstart{D}{ynamic} mode decomposition (DMD) has been widely used to reveal the discrete spectrum inherent in complex dynamical systems \cite{Kutz2016}. Applications range from epidemiology \cite{Proctor2015} and neuroscience \cite{Brunton2016} to power and energy systems \cite{Netto2021} and control \cite{Proctor2016}, showcasing its prominence. Over the years, various algorithms have been proposed to enhance or extend the original DMD method \cite{Schmid2010}, some of which are referenced here. For a more comprehensive review, interested readers are directed to \cite{Colbrook2023}. 

The original DMD method \cite{Schmid2010} and most of its variations have been developed for \emph{batch processing}---that is, in plain terms, to collect data first and then analyze it. Conversely, DMD methods tailored to \emph{stream processing}---in which case data streams are sequentially analyzed as they arrive, one snapshot at a time---have been relatively less explored. Unlike batch-processing DMD algorithms that require access to the full dataset upfront, stream-processing algorithms must output results continuously as new data becomes available. 

This shift in computing paradigm, from batch processing to real-time insights in data \cite{vanDongen2020}, requires a reformulation of the batch algorithm. The streaming dynamic mode decomposition (sDMD) method \cite{Hemati2014} offers such reformulation, introducing fundamental structural changes to support stream processing. The subsequent application of sDMD to characterize and distill the dynamics of a wind farm \cite{Liew2022} exemplifies its potential for real-time analysis of energy systems and beyond. 

A few select works build upon the original sDMD \cite{Hemati2014} method, which, like its batch-processing counterpart algorithm \cite{Schmid2010}, invokes a standard least-squares solution \cite{Abolmasoumi2022}. A straightforward modification that replaces a least-squares solution with a \emph{total least-squares} \cite{Hemati2017} solution enhances the sDMD performance in extracting interpretable information from noisy data, as demonstrated in \cite{Hemati2016} using experimental time-resolved particle image velocimetry measurements. 

Leveraging incremental singular value decomposition \cite{Brand2002}---a method for updating a matrix's prior singular value decomposition as new data arrives, rather than computing it from scratch each time---represents another advancement in sDMD. Recall that the original DMD \cite{Schmid2010} and sDMD \cite{Hemati2014} both utilize singular value decomposition as part of their algorithm. Not only does the incremental singular value decomposition fit naturally within the stream processing paradigm, but it also lends itself to sparsification \cite{Berry1992} after the incremental process, as demonstrated in \cite{Matsumoto2017}. 

Yet another advancement to the class of stream-processing DMD methods \cite{Zhang2019, Susuki2022} leverages the \emph{matrix inversion lemma}, also known as the Sherman-Morrison-Woodbury formula or Woodbury matrix identity, to compute the inverse of a rank-$n$ update to a matrix whose inverse has previously been calculated. This development enabled the introduction of a weighting scheme to emphasize recent data and a windowed approach, effectively adapting the algorithm to evolving system dynamics \cite{Zhang2019}. Building upon \cite{Matsumoto2017} and \cite{Zhang2019}, the work in \cite{Alfatlawi2020} combines incremental singular value decomposition with the use of the matrix inversion lemma, thereby extending its applicability to dynamical systems with control inputs. A formulation tailored explicitly to low-rank datasets also exists \cite{Nedzhibov2023a}, designed for applications where the number of snapshots is less than the number of state variables.

In this letter, we are particularly interested in improving the computational efficiency of sDMD. We observe that the core functionality of sDMD is based on expanding and rotating two separate orthonormal bases corresponding to the column spaces of the same underlying dynamical system---an approach that introduces redundancy.

Motivated by this insight, the primary contribution of this work is to demonstrate that maintaining only a single orthonormal basis is sufficient to characterize the system dynamics accurately. This modification eliminates the need for dual basis updates, resulting in a constant-factor reduction in computational complexity compared to sDMD. We refer to the proposed method as efficient streaming dynamic mode decomposition (esDMD), as in computer science, efficient computation refers to designing algorithms and systems that perform calculations with minimal time and memory usage.

This letter proceeds as follows. Section \ref{sec:prelim} briefly introduces the derivation of the linear operators underlying DMD \cite{Schmid2010} and sDMD \cite{Hemati2014}. Section \ref{sec:esdmd} presents our main contribution, detailing modifications to the sDMD formulation. Section \ref{sec:exp} compares the proposed esDMD with sDMD and benchmarks these methods against the batch-processing DMD, using canonical systems as testbeds. Section \ref{sec:conc} concludes this letter and outlines potential directions for future research.

\subsection{Notation}
Lowercase (uppercase) boldface letters, e.g., $\bm{x}$ ($\bm{X}$), denote vectors (matrices). Vector and matrix slicing follow Python notation; for instance, \(\bm{X}[:i,\,:j]\) refers to the submatrix of \(\bm{X}\) consisting of its first \(i\) rows and first \(j\) columns. \(\bm{X}^\top\), \(\bm{X}^+\), \(\norm{\mathbf{X}}_F\), \(\mathrm{cols}(\bm{X})\), and \(\mathrm{span}(\bm{X})\) denote, respectively, the transpose, the Moore-Penrose pseudoinverse, the Frobenius norm, the number of columns, and the column space of \(\bm{X}\). $\bm{I}$ is an identity matrix of appropriate dimensions, dictated by the context in which it is used. The symbol $\equiv$ denotes equivalence. The \(\ell^2\)-norm of a vector \(\bm{x}\) is \(\norm{\bm{x}}\). A vector \(\bm{x}^\parallel\) is the projection of \(\bm{x}\) onto some orthonormal basis.

\section{Preliminaries} \label{sec:prelim}

\subsection{Batch-processing dynamic mode decomposition (DMD)}
Let an autonomous dynamical system evolving on a finite, $n$-dimensional manifold $\mathbb{X}$ be:
\begin{equation}
    \bm{x}_{i} = \bm{f}(\bm{x}_{i-1}), \; \quad\text{for discrete time } i\in\mathbb{Z},
    \label{eq:dynsys_map}
\end{equation}
where $\bm{f}:\mathbb{X} \to \mathbb{X} \subset \mathbb{R}^{n}$ is a nonlinear, vector-valued map. A set of states $\{\xm{i}\}_{i=1}^k$, $k \ge n$, evolving according to \eqref{eq:dynsys_map} can be used to build the following data matrices:
\begin{equation}
\Xk = \begin{bmatrix}
    \xm1&\dots&\xm{k}
\end{bmatrix},\quad
\Yk = \begin{bmatrix}
    \ym1&\dots&\ym{k}
\end{bmatrix},
\label{eq:datamatrices}
\end{equation}
where $\bm{y}_{i}=\bm{x}_{i+1}$ and every tuple $(\xm{i},\ym{i})$ is called a snapshot pair. The batch-processing DMD computes a linear operator $\A_k \in \mathbb{R}^{n\times n}$ that minimizes, for all snapshot pairs, the cost
\begin{align}
    J = \sum_{i=1}^k \norm{\ym{i} - \A_k\,\xm{i}}^2
    = \norm{\Yk - \A_k\,\Xk}_F^2. \label{eq:dmdcost2}
\end{align}

If the tuples $(\xm{i},\ym{i})$ in \eqref{eq:datamatrices} follow a Gaussian distribution,
\begin{equation}
    \A_k = \Yk\,\Xk^+
    \label{eq:dmdminimizer}
\end{equation}
is the optimal (maximum likelihood) solution to \eqref{eq:dmdcost2} \cite{Abolmasoumi2022}.

\subsection{Projected, lower-dimensional operator}
Let $\Xk = \bm{U}_{x}\,\bm{\Sigma}_{x}\,\bm{V}_{x}\T$ be the singular value decomposition of $\Xk\in\mathbb{R}^{n\times k}$; and for an integer $r < n$, let $\Qm{\Xk}\in\mathbb{R}^{n\times r}$ be an orthonormal basis for the column space of $\Xk$ obtained by retaining the first $r$ columns of $\bm{U}_{x}$. By projecting the operator $\A_k$ \eqref{eq:dmdminimizer} onto this basis as
\begin{align}
   \St_k &= \Qm{\Xk}\T \A_k\,\Qm{\Xk} = \Qm{\Xk}\T\,\Yk\,\Xk^+\Qm{\Xk},
   \label{eq:reducedA}
\end{align}
the eigenvalues and eigenvectors of $\A_k\in\mathbb{R}^{n\times n}$ can be computed from those of a much smaller matrix $\St_k\in\mathbb{R}^{r\times r}$. Note that the amount of storage required to compute $\St_k$ increases as the number of snapshot pairs, $k$, grows.

\subsection{Streaming dynamic mode decomposition (sDMD)}
The batch-processing DMD can be reformulated to handle streaming data, where snapshot pairs $(\xm{i}, \ym{i})$ are acquired incrementally, one at a time \cite{Hemati2014}. Two orthonormal bases, $\Qm{\Xk}$ and $\Qm{\Yk}$, are maintained to span the evolving column spaces of $\Xk$ and $\Yk$, respectively. These bases are computed and updated incrementally as new snapshot pairs arrive. 

Define $\Xkt \coloneqq \Qm{\Xk}\T \Xk$ and $\Ykt \coloneqq \Qm{\Yk}\T \Yk$. Since $\Xk$ has $n$ linearly independent rows, $\Xk^+ = \Xk\T (\Xk\Xk\T)^{-1}$, and the reduced-order linear operator in \eqref{eq:reducedA} can be expressed as:
\begin{equation*}
    \St_k = \Qm{\Xk}\T\,\Qm{\Yk}\,\Ck\,\GXk^+,
\end{equation*}
where $\Ck = \Ykt \Xkt\T$ and $\GXk = \Xkt \Xkt\T$ are $r \times r$ matrices representing accumulated inner products in reduced coordinates. Since only a single snapshot pair $(\xm{i}, \ym{i})$ is available at each time step in stream processing, $\Ck$ and $\GXk$ must be updated incrementally using projected coordinate vectors $\xkt = \Qm{\Xk}\T \xm{i}$ and $\ykt = \Qm{\Yk}\T \ym{i}$. These updates follow:
\begin{align}
    \Ck &= \C_{k-1} + \ytm{k}\,\xtm{k}\T, \label{eq:Ckupdate} \\
    \GXk &= \bm{G}_{\Xm{k-1}} + \xtm{k}\,\xtm{k}\T, \label{eq:Gxupdate}
\end{align}
where $\C_{k-1} = \sum_{i=1}^{k-1} \ytm{i} \xtm{i}\T$ and
$\bm{G}_{\Xm{k-1}} = \sum_{i=1}^{k-1} \xtm{i} \xtm{i}\T$ are accumulated terms from prior time steps.

At each time step, the bases $\Qm{\Xk}$ and $\Qm{\Yk}$ are updated through Gram-Schmidt orthonormalization to incorporate the latest data stream. To maintain a fixed basis rank $r$, a proper orthogonal decomposition compression step is applied whenever the number of basis vectors exceeds $r$. This is accomplished by computing the eigendecompositions of the Gram matrices $\GXk$ and $\GYk$, where $\GYk = \bm{G}_{\Ym{k-1}} + \ytm{k}\,\ytm{k}\T$, and $\bm{G}_{\Ym{k-1}} = \sum_{i=1}^{k-1} \ytm{i} \ytm{i}\T$. 

We are now in the position to present the main contribution of this letter.

\section{Efficient streaming dynamic mode decomposition (esDMD)} \label{sec:esdmd}
The data matrices in \eqref{eq:datamatrices} admit a compact representation, as their column spaces lie within low-dimensional subspaces of the state space. These subspaces can be efficiently characterized by orthonormal bases---a property that is central to the design of sDMD \cite{Hemati2014}, where two separate bases are maintained for the $\bm{x}$ and $\bm{y}$ snapshots.

However, maintaining two distinct bases, $\Qm{\Xk}$ and $\Qm{\Yk}$, is unnecessary when considering the temporal continuity of the data stream. Since $\ym{i} = \xm{i+1}$, it directly follows that $\xm{i} = \ym{i-1}$. Therefore, each streaming snapshot pair can be equivalently expressed as $(\ym{i-1}, \ym{i})$, enabling the use of a single evolving basis without loss of generality.

\begin{theorem}\label{theorem:single_base_sufficiency}
Consider a discrete-time, autonomous dynamical system evolving on a finite, $n$-dimensional manifold $\mathbb{X}$,
\begin{equation*}
    \bm{x}_{i} = \bm{f}(\bm{x}_{i-1}),
\end{equation*}
where $i\in\mathbb{Z}$ and $\bm{f}:\mathbb{X} \to \mathbb{X} \subset \mathbb{R}^{n}$. Let streaming data pairs $\left(\xm{i},\ym{i}\right) \equiv \left(\ym{i-1},\ym{i}\right)$ form matrices $\Xk=\begin{bmatrix}\xm1&\ldots&\xk\end{bmatrix}$ and $\Yk=\begin{bmatrix}\ym1&\ldots&\yk\end{bmatrix}$.
There exists $\Qk\in\mathbb{R}^{n\times r},\,r\leq n$, 
such that $\mathrm{span}(\Qk)=\mathrm{span}(\Xk)$ and $\mathrm{span}(\Qk)=\mathrm{span}(\Yk)$, for all $i \geq 1$.
\end{theorem}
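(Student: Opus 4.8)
The plan is to reduce the two-basis claim to a single statement about column spaces. A single $\Qk$ satisfying both $\mathrm{span}(\Qk)=\mathrm{span}(\Xk)$ and $\mathrm{span}(\Qk)=\mathrm{span}(\Yk)$ can exist only if $\mathrm{span}(\Xk)=\mathrm{span}(\Yk)$; conversely, once that equality is in hand, I would simply take $\Qk$ to be any orthonormal basis of the common subspace---for instance the first $r$ left singular vectors of $\Xk$, exactly as $\Qm{\Xk}$ is constructed near \eqref{eq:reducedA}. So the entire content of the theorem lies in proving the equality of the two column spaces, after which the existence of $\Qk$ and both span identities follow immediately.

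To establish that equality I would exploit the overlap induced by the temporal shift $\ym{i}=\xm{i+1}$. Writing out the columns, $\Xk=[\bm{x}_1,\ldots,\bm{x}_k]$ and $\Yk=[\bm{x}_2,\ldots,\bm{x}_{k+1}]$, so the two matrices share the interior columns $\bm{x}_2,\ldots,\bm{x}_k$ and differ only in the two boundary snapshots $\bm{x}_1$ (present only in $\Xk$) and $\bm{x}_{k+1}$ (present only in $\Yk$). Letting $\mathcal{W}\coloneqq\mathrm{span}\{\bm{x}_2,\ldots,\bm{x}_k\}$ denote the shared interior subspace, I would write $\mathrm{span}(\Xk)=\mathcal{W}+\mathrm{span}\{\bm{x}_1\}$ and $\mathrm{span}(\Yk)=\mathcal{W}+\mathrm{span}\{\bm{x}_{k+1}\}$, reducing the problem to showing that neither boundary snapshot enlarges $\mathcal{W}$.

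The main obstacle is precisely this last step, and it is where the low-rank structure underlying DMD must enter. In general the two spans need not coincide---a purely transient initial state $\bm{x}_1$ lying outside the asymptotic subspace would give $\bm{x}_1\notin\mathcal{W}$---so the claim cannot hold without the standing assumption that the trajectory is confined to, and richly samples, a fixed $r$-dimensional subspace $\mathcal{V}\subseteq\mathbb{R}^n$ with $r\leq n$. Under that assumption I would argue that, once enough data has streamed in, the interior snapshots already achieve $\dim\mathcal{W}=r$, so that $\mathcal{W}=\mathcal{V}$; since every snapshot---including the boundary ones $\bm{x}_1$ and $\bm{x}_{k+1}$---lies in $\mathcal{V}$, both $\mathrm{span}(\Xk)$ and $\mathrm{span}(\Yk)$ collapse onto $\mathcal{V}$ and hence are equal. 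The remaining work is then bookkeeping: verify the richness (persistent-excitation) condition $\dim\mathcal{W}=r$ over the admissible range of indices, dispose of the degenerate early regime where the rank has not yet saturated, and confirm that the chosen orthonormal basis $\Qk$ of $\mathcal{V}$ meets both span requirements simultaneously.
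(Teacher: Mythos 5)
You take a genuinely different route from the paper. The paper's proof is a constructive induction: it initializes $\Qm1=\text{\textsc{qr}}(\begin{bmatrix}\xm1 & \ym1\end{bmatrix})$, i.e., an orthonormal basis for the \emph{joint} column space of the first snapshot pair, and at step $k+1$ performs a single Gram--Schmidt update with $\ym{k+1}$ alone, the key point being that $\xm{k+1}=\ym{k}$ already lies in $\mathrm{span}(\Qk)$ by the inductive hypothesis, so at most one new direction enters per step; the proof essentially \emph{is} the update rule of Algorithm~\ref{alg:esdmd-algo}. You instead try to establish $\mathrm{span}(\Xk)=\mathrm{span}(\Yk)$ directly by splitting off the shared interior columns $\xm2,\ldots,\xk$ and asking whether the boundary snapshots $\xm1$ and $\xm{k+1}$ enlarge that common subspace. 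Your observation that they can is correct, and it exposes a defect in the theorem as literally stated: take a trajectory that hits a fixed point at the second step, $\xm3=\xm2$ with $\xm1,\xm2$ linearly independent; then $\dim\mathrm{span}(\Xm2)=2\neq1=\dim\mathrm{span}(\Ym2)$, so no single basis can satisfy both equalities. The paper's induction sidesteps this because the invariant it actually propagates is $\mathrm{span}(\Qm{i})=\mathrm{span}(\begin{bmatrix}\Xm{i} & \Ym{i}\end{bmatrix})$, which \emph{contains} both column spaces but need not equal either (already in the base case $\mathrm{span}(\Qm1)$ is generically two-dimensional while $\mathrm{span}(\Xm1)$ is one-dimensional). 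The trade-off is therefore: the paper's argument requires no dynamical hypothesis and directly delivers the streaming construction, at the price of proving a containment rather than the claimed equality; your argument characterizes exactly when the claimed equality does hold, but only under an added rank-saturation (persistent-excitation) hypothesis that is not in the statement, and it does not by itself produce the incremental basis update the algorithm needs. To make your route land you would have to either add that hypothesis explicitly, or weaken the conclusion to $\mathrm{span}(\Xk)\subseteq\mathrm{span}(\Qk)$ and $\mathrm{span}(\Yk)\subseteq\mathrm{span}(\Qk)$ --- which is what the paper in effect proves and is all that the downstream derivation of $\Skt$ actually uses.
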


\begin{proof}
    The proof is by induction. 
    
    First, let $i=1$, and
    \begin{align*}
       \Xm1=\bigl[\xm1\bigr],\quad\Ym1=\bigl[\ym1\bigr]. 
    \end{align*}
    Initialize
    \begin{equation}
    \Qm1=\text{\textsc{qr}}(\begin{bmatrix}
        \xm1 & \ym1
    \end{bmatrix}),
    \label{eq:Qinit}
    \end{equation}
    where $\text{\textsc{qr}}(\cdot)$ returns an orthonormal basis obtained via QR decomposition. This immediately implies $\mathrm{span}(\Qm1)=\mathrm{span}(\begin{bmatrix}\Xm1&\Ym1\end{bmatrix})$. 
    
    Next, assume for $i=k,\,k>1$, $\exists\Qk$, such that $\mathrm{cols}(\Qk)<r$ and $\mathrm{span}(\Qk)=\mathrm{span}(\Xk)$ and $\mathrm{span}(\Qk)=\mathrm{span}(\Yk)$. We now prove the statement for $i=k+1$ using Gram-Schmidt orthonormalization. Let,
    \begin{align*}
        \ym{k+1}^\parallel=\Qk\left(\Qk\T\ym{k+1}\right),\quad\bm{e}_{k+1}=\ym{k+1}-\ym{k+1}^\parallel.
    \end{align*}
    Set
    \begin{align*}
        \Qm{k+1}=\begin{cases}
            \begin{bmatrix}
                \Qk & \bm{p}_{k+1}
            \end{bmatrix} ,& \nicefrac{\bm{e}_{k+1}}{\norm{\ym{k+1}}}>\epsilon\\
            \Qk ,& \nicefrac{\bm{e}_{k+1}}{\norm{\ym{k+1}}}\le\epsilon
        \end{cases},
    \end{align*}
    where, $\bm{p}_{k+1}=\nicefrac{\bm{e}_{k+1}}{\norm{\bm{e}_{k+1}}}$ and $\epsilon$ is some minimum tolerance. Since, it is established that $\xm{k+1}=\ym{k}$, then, by construction,
    \begin{align*}
        \mathrm{span}(\Qm{k+1})&=\mathrm{span}(\begin{bmatrix}
            \Qk & \bm{p}_{k+1}
        \end{bmatrix})\\
        &= \mathrm{span}(\begin{bmatrix}
            \Xk & \xm{k+1}
        \end{bmatrix})=\mathrm{span}(\Xm{k+1})\\
        &= \mathrm{span}(\begin{bmatrix}
            \Yk & \ym{k+1}
        \end{bmatrix})=\mathrm{span}(\Ym{k+1}).
    \end{align*}
    Thus, by induction, Theorem \ref{theorem:single_base_sufficiency} holds for all $i\geq1$ and $\mathrm{cols}(\Qm{i})\leq r$.
\end{proof}

\begin{remark}
After the initial period, if $\mathrm{cols}(\Qm{i})>r$, $\Qm{i}$ is compressed and rotated to align with the dynamics not spanned by the current basis matrix.
\end{remark}

In this work, the initialization in equation \eqref{eq:Qinit} serves as the main pivot point from \cite{Hemati2014}, allowing the use of a single basis instead of two. This initialization is inspired by Algorithm 3 of \emph{Exact DMD} \cite{Tu2014}, where the orthonormal basis is computed for the joint column space of $\begin{bmatrix}
   \Xk & \Yk 
\end{bmatrix}$.

\subsection{Rank-preserving updates}
Similar to streaming DMD, at time step $i=k+1$, we use the eigendecomposition of the Gram matrix $\GYk$ to transform the basis $\Qk$ into $\Qm{k+1}$. Next, we outline the update procedure for the orthonormal basis after its initial construction if the newest snapshot introduces a mode not captured in the current basis:
\begin{enumerate}
    \item Append a new column to the current basis:
    \begin{align*}
        \Qk' = \begin{bmatrix}
            \Qk & \bm{p}_{k+1}
        \end{bmatrix}.
    \end{align*}
    \item Pad the existing Gram matrix with zeros to match the dimensions of the expanded basis:
    \begin{align*}
        \GYk' &= \begin{bmatrix}
            \GYk & \bm{0} \\
            \bm{0} & 0
        \end{bmatrix}.
    \end{align*}
    \item Retain the top $r$ eigenvectors of $\GYk'$:
    \begin{align*}
        \GYk' &= \bm{W}_{k}'\,\bm{\Lambda}_k'\,\bm{W}_k'^{\top},\\
        \tilde{\bm{W}}_k' &= \bm{W}_k'\left[:,\,:r\right].
    \end{align*}
    \item Rotate $\Qk'$ to obtain the updated basis:
    \begin{align*}
        \Qm{k+1} &= \Qk'\,\tilde{\bm{W}}_k'.
    \end{align*}
\end{enumerate}

Note that, by construction, $\Xk$ always lies in $\mathrm{span}(\Qk)$. Moreover, rotation by eigenvectors preserves the span, since $\Qm{k+1}$ inherits the column space of $\Qk$. Furthermore, this rotation also preserves orthonormality, since
\begin{align*}
    \Qm{k+1}\T\Qm{k+1} &= \tilde{\bm{W}}_k'^{\top}\Qk'^{\top}\Qk'\,\tilde{\bm{W}}_k'\\
    &= \tilde{\bm{W}}_k'^{\top}\bm{I}\,\tilde{\bm{W}}_k' = \bm{I}.
\end{align*}

\subsection{Low-dimensional operator}
With a single orthonormal basis, $\Qk$, at any given time step $i=k$, equation \eqref{eq:reducedA} can be written as
\begin{align}
    \St_k &= \Qk\T\Yk\,\Xk^+\Qk \nonumber \\
    &= \Qk\T\Yk\,\Xk\T\left(\Xk\Xk\T\right)^+ \label{eq:singleSt_pinvidentity}\\
    &= \Qk\T\Qk\Ykt\,\Xkt\T\Qk\T\left(\Qk\,\Xkt\,\Xkt\T\Qk\T\right)^+ \label{eq:singleSt_projections} \\
    &= \Ykt\,\Xkt\T\left(\Xkt\,\Xkt\T\right)^+, \label{eq:singleSt_last}
\end{align}
where \eqref{eq:singleSt_pinvidentity} derives from the pseudoinverse identity $\Xk^+=\Xk\T\left(\Xk\,\Xk\T\right)^+$; \eqref{eq:singleSt_projections} uses the low dimensional projections $\Xkt=\Qk\T\Xk$ and $\Ykt=\Qk\T\Yk$; and \eqref{eq:singleSt_last} can be rewritten with data vectors using \eqref{eq:Ckupdate} and \eqref{eq:Gxupdate}, as
$\Skt = \C_k\,\GXk^+$.

\subsection{Computational complexity and redundancy}
Algorithm \ref{alg:esdmd-algo} updates the basis at each time step and stores all relevant intermediate variables between calls, allowing for their reuse in subsequent time steps. The matrices $\GXm{i}$ and $\Cm{i}$ produced by Algorithm \ref{alg:esdmd-algo} are subsequently used to compute the linear operator, as described in Algorithm \ref{alg:linear-operator}.

The primary computational costs in sDMD arise from performing eigendecompositions of the matrices $\GXm{i}$ and $\GYm{i}$, computing the Moore--Penrose pseudoinverse of $\GXm{i}$, and calculating and decomposing the matrix $\St_i$. Taken together, these operations result in an overall computational complexity of $\mathcal{O}(nr^3)$. Our proposed algorithm shares this asymptotic complexity; however, it achieves a constant factor speed-up by eliminating redundant computations.

\begin{algorithm}[!t]
\caption{esDMD: Efficient streaming dynamic mode decomposition}  
\label{alg:esdmd-algo}
\begin{algorithmic}
\Ensure At time step $i=1$, $\texttt{new\_direction}=\texttt{false}$
\Procedure{UpdateBasis}{$(\xm{i},\ym{i})$}
\If{$i=1$}
\State $\Qm{i}\gets\Call{qr}{\begin{bmatrix}
                             \xm{i} & \ym{i}
                             \end{bmatrix}}$
\State $\xtm{i}\gets\Qm{i}\T\xm{i},\quad\ytm{i}\gets\Qm{i}\T\ym{i}$
\State $\GXm{i}\gets\xtm{i}\,\xtm{i}\T,\quad\GYm{i}\gets\ytm{i}\,\ytm{i}\T$
\State $\Cm{i}\gets\ytm{i}\,\xtm{i}\T$
\State $\ytm{-}\gets\ytm{i}$
\State \Return $\Qm{i},\,\GXm{i},\,\Cm{i}$
\EndIf
\State $\ym{i}^\parallel\gets\Qm{i-1}\left(\Qm{i}\T\,\ym{i}\right)$
\State $\bm{e}_i\gets\ym{i}-\ym{i}^\parallel$
\If{$\nicefrac{\norm{\bm{e}_i}}{\norm{\ym{i}}}>\epsilon$}
\State $\bm{p}_i\gets\nicefrac{\bm{e}_i}{\norm{\bm{e}_i}}$
\State $\Qm{i}'\gets\begin{bmatrix}
\Qm{i-1} & \bm{p}_i
\end{bmatrix}$\\[0.05em]
\State $\GXm{i}'\gets\begin{bmatrix}
\GXm{i-1} & \bm{0}\\
\bm{0} & 0
\end{bmatrix},\quad\GYm{i}'\gets\begin{bmatrix}
\GYm{i-1} & \bm{0}\\
\bm{0} & 0
\end{bmatrix}$\\[0.05em]
\State $\Cm{i}'\gets\begin{bmatrix}
\Cm{i-1} & \bm{0}\\
\bm{0} & 0
\end{bmatrix}$\\[0.08em]
\If{$\mathrm{cols}(\Qm{i}')>r$}
\State $\bm{W}_i',\,\bm{\Lambda}_i'\gets\Call{eigh}{\GYm{i}'}$
\State $\tilde{\bm{W}}_i'\gets\bm{W}_i'[:,\,:r]$
\State $\Qm{i}\gets\Qm{i}'\tilde{\bm{W}}_i'$
\State $\GXm{i}''\gets\tilde{\bm{W}}_i'^{\top}\GXm{i}'\tilde{\bm{W}}_i'$
\State $\Cm{i}''\gets\tilde{\bm{W}}_i'^{\top}\Cm{i}'\tilde{\bm{W}}_i'$
\State $\GYm{i}''\gets\bm{\Lambda}[:r,\,:r]$
\Else
\State $\Qm{i}\gets\Qm{i}'$
\EndIf
\State $\texttt{new\_direction}\gets\texttt{true}$
\EndIf
\State $\ytm{i}\gets\Qm{i}\T\ym{i}$
\State $\xtm{i}\gets\begin{cases}
\Qm{i}\T\xm{i},& \texttt{new\_direction}=\texttt{true}\\
\ytm{-},& \text{otherwise}
\end{cases}$
\State $\GXm{i}\gets\GXm{i}''+\xtm{i}\,\xtm{i}\T,\quad\GYm{i}\gets\GYm{i}''+\ytm{i}\,\ytm{i}\T$
\State $\Cm{i}\gets\Cm{i}''+\ytm{i}\,\xtm{i}\T$
\State $\ytm{-}\gets\ytm{i}$
\State $\texttt{new\_direction}\gets\texttt{false}$
\State \Return $\Qm{i},\,\GXm{i},\,\Cm{i}$
\EndProcedure
\end{algorithmic}
\end{algorithm}

In the sDMD method, the emergence of a new dynamic mode that is not already spanned by either of the existing bases requires consecutive updates to both $\Qm{\Xm{i}}$ and $\Qm{\Ym{i}}$, resulting in nearly identical computations for the same unseen direction at consecutive time steps. This redundancy arises from the maintenance of two separate bases that represent the same underlying dynamical process. Conversely, the proposed esDMD method consolidates these updates into a single basis, ensuring each new direction is accounted for only once. Importantly, maintaining two separate bases leads to consecutive eigendecompositions once the bases reach the maximum allowable number of columns, significantly increasing computational overhead. In contrast, the proposed method streamlines this process, enhancing computational efficiency without compromising accuracy.

\begin{algorithm}[!t]
\caption{Compute linear operator}
\label{alg:linear-operator}
\begin{algorithmic}
\Procedure{DynamicSpectrum}{$(\xm{i},\,\ym{i})$}
\State $\Qm{i},\,\GXm{i},\,\Cm{i}\gets\Call{UpdateBasis}{(\xm{i},\ym{i}),\,i}$
\State $\St_i\gets\Cm{i}\,\GXm{i}^+$
\State $\bm{V}_i,\,\bm{\Lambda}_i\gets\Call{eig}{\St_i}$
\State $\bm{\Phi}_i\gets\Qm{i}\,\bm{V}_i$
\State \Return $\bm{\Phi}_i,\,\mathrm{diag}(\bm{\Lambda}_i)$
\EndProcedure
\end{algorithmic}
\end{algorithm}

Although implementation details may vary by application, a direct comparison of naive reference implementations reveals that our method requires approximately 350 static bytecode instructions, whereas sDMD requires around 515. These counts correspond to the total number of static instructions within the basis update routines for each algorithm. The actual number of instructions executed per iteration may vary due to control flow and branching; however, the overall computational footprint of our method is demonstrably lower, and it could be further reduced through deliberate code-level optimizations.

\section{Numerical experiments} \label{sec:exp}
The proposed algorithm\footnote{Code available at \url{https://github.com/xakalex/esdmd}} was tested on (i) an oscillatory system generated from a sum of scaled sinusoids---the same system used in \cite{Hemati2014}, and (ii) a Kuramoto oscillator network. The results indicate that the proposed esDMD algorithm performs comparably to sDMD in capturing the dominant dynamic modes, as benchmarked against the batch-processing DMD. Both systems used for comparison are configured with $n = 100$ state variables, simulated for 10 seconds, and sampled uniformly at $f_s=120$Hz (equivalently, $\Delta t=0.0083$), yielding a total of $m=1200$ time steps. The maximum allowable rank for the streaming algorithms is set to $r=10$. The following two metrics are used for comparison:
\begin{enumerate}
    \item \emph{Spectrum}. The eigenvalues and their corresponding normalized frequencies obtained from the streaming algorithms at the final time step are compared with those calculated using the original batch-processing DMD.
    \item The \emph{execution time} of the update routines for the streaming algorithms is compared over time steps.
\end{enumerate}

\subsection{Oscillatory system}
This is a nonlinear system formed by scaled sums of sinusoids, given as
\begin{align}
\begin{split}
    \bm{y}(t)={}&\bm{v}_1\sin(2\pi f_1t)+\bm{v}_2\cos(2\pi f_2t)\\
    &+\bm{v}_3\sin(2\pi f_1t)+\bm{v}_4\cos(2\pi f_2t) 
\end{split}\label{eq.12}
\end{align}
where $\bm{v}_1,\ldots,\bm{v}_4$ are tunable vectors. In this work, they are initialized randomly following a Normal distribution $\mathcal{N}(0,1)$.
\begin{figure}
    \centering
    \subfloat[]{
    \includegraphics[width=0.9\linewidth]{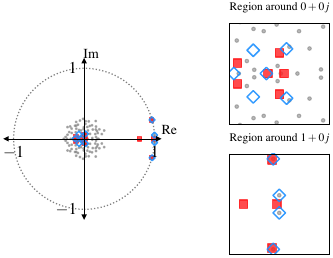}
    \label{fig:hemati-evals-sdmd}
    }
    
    \subfloat[]{
    \includegraphics[width=0.5\linewidth]{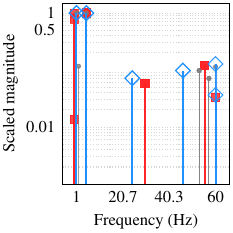}
    \label{fig:hemati-modes-sdmd}
    }
    \subfloat[]{
    \includegraphics[width=0.5\linewidth]{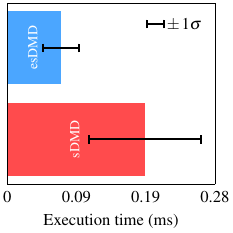}
    \label{fig:hemati-exectimes}
    }
    \caption{Comparing results obtained with DMD (grey), sDMD (red), and the proposed esDMD (blue) for the nonlinear oscillatory system \eqref{eq.12}. (a) complex discrete eigenvalues; (b) normalized mode frequencies (only the first $r$ dominant modes shown), and (c) execution time.}
    \label{fig:hemati-comparison}
\end{figure}

\subsection{Kuramoto model}
Synchronization phenomena appear in many natural and man-made systems, ranging from the synchronized flashing of fireflies to the behavior of Josephson junction arrays. The Kuramoto model is a standard framework for studying these synchronization phenomena. It describes a system of $n$ coupled oscillators with arbitrary natural frequencies, characterized by sinusoidal coupling that depends on their phase differences. Under appropriate conditions, this simple model demonstrates the emergence of phase synchronization among oscillators over time. This work uses a damped Kuramoto model adapted from \cite{kuramotoPy}, consisting of $n$ oscillators governed by:
\begin{equation}
    (1+\gamma)\,\dot{\theta}_i=\omega_i+K\sum_{j=1}^n A_{ij}\sin(\theta_j-\theta_i),\quad i=1,\ldots,n
    \label{eq:kuramoto-ode}
\end{equation}
where $\theta_i$ represents the phase of the $i$th oscillator, $\omega_i$ is its natural frequency, and $K$ is the coupling strength. The adjacency matrix $\bm{A}\in\{0,1\}^{n\times n}$ encodes the coupling topology, specifying whether any pair of oscillators interact ($A_{ij}=1$) or not ($A_{ij}=0)$. $\gamma\in[0,1]$ is the damping factor that modulates the oscillator's response, affecting the synchronization dynamics.

To evaluate the algorithms, we use the sine of the solution trajectories generated by \eqref{eq:kuramoto-ode} rather than the trajectories themselves. This transformation introduces bounded, oscillatory dynamics and can exhibit exponential convergence when the natural frequencies are close to zero. In our setup, the natural frequencies $\omega_{i}$ are drawn from the uniform distribution $\mathcal{U}[2.5, 3]$, and the damping is set to $\gamma = 0.9$.

\begin{figure}
    \centering
    \subfloat[]{
    \includegraphics[width=0.9\linewidth]{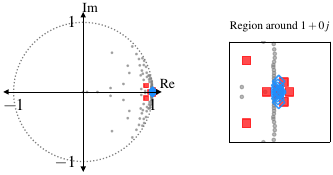}
    \label{fig:kuramoto-evals-sdmd}
    }
    
    \subfloat[]{
    \includegraphics[width=0.5\linewidth]{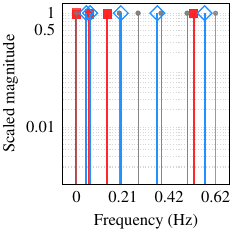}
    \label{fig:kuramoto-modes-sdmd}
    }
    \subfloat[]{
    \includegraphics[width=0.5\linewidth]{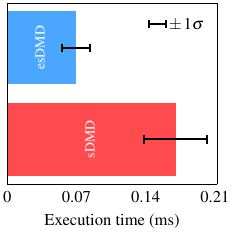}
    \label{fig:kuramoto-exectimes}
    }
    \caption{Comparing results obtained with DMD (grey), sDMD (red), and the proposed esDMD (blue) for the Kuramoto system \eqref{eq:kuramoto-ode} with one hundred oscillators. (a) complex discrete eigenvalues; (b) normalized mode frequencies (only the first $r$ dominant modes shown), and (c) execution time.}
    \label{fig:kuramoto-comparison}
\end{figure}

\subsection{Metrics comparison}
Fig. \ref{fig:hemati-comparison} and \ref{fig:kuramoto-comparison} illustrate the metrics described in section \ref{sec:exp} for the oscillatory system and the Kuramoto model, respectively. For the oscillatory system, Fig. \ref{fig:hemati-evals-sdmd} compares the complex eigenvalues computed by sDMD (red square markers) and the proposed esDMD (blue diamond markers) against the benchmark batch-processing DMD (gray circular markers). Fig. \ref{fig:hemati-modes-sdmd} displays the normalized dynamic mode frequencies in the range $[0,1]$ for both streaming methods (red and blue stems, respectively), again compared to DMD (gray stems). In all cases, both streaming variants successfully capture the dominant $r$ spectral components. Fig. \ref{fig:hemati-exectimes} presents the average execution time per iteration in milliseconds for esDMD (blue bar) and sDMD (red bar). Each horizontal bar marks the mean over all time steps, and the whiskers denote $\pm1$ standard deviation. The shorter and less variable bar for esDMD highlights its computational advantage over standard sDMD.

Analogous results for the Kuramoto model are shown in Fig. \ref{fig:kuramoto-evals-sdmd}, which compares the complex eigenvalues produced by the two streaming methods with those from batch-processing DMD. Comparisons of modal frequencies are shown in Fig. \ref{fig:kuramoto-modes-sdmd}. As with the oscillatory system, both streaming approaches accurately track the leading spectral features. The corresponding execution time comparison is provided in Fig. \ref{fig:kuramoto-exectimes}. Together, these results demonstrate that our proposed approach retains the fidelity of sDMD while reducing computational overhead, enhancing its practicality for streaming applications.

\section{Conclusions and future work} \label{sec:conc}
We introduced esDMD, a reformulation of sDMD that reduces computational redundancy by maintaining only a single orthonormal basis. This simplification not only streamlines the algorithmic workflow but also preserves the accuracy of the computed dynamic modes and eigenvalues. Unlike sDMD, which maintains two separate bases for $\bm{x}$ and $\bm{y}$ snapshots, the proposed esDMD constructs an initial orthonormal basis for the joint column space of the first snapshot pair and incrementally updates this single basis over time. The remaining steps, including the proper orthogonal decomposition compression step, closely follow the original sDMD method \cite{Hemati2014}. We evaluated the accuracy of our method by comparing it to sDMD and batch-processing DMD as benchmarks. This comparison included metrics such as complex eigenvalues, normalized modal frequencies, and the per-iteration execution time. We used two representative examples for testing: a nonlinear time-varying oscillatory system and a damped Kuramoto oscillator model. The results demonstrate that the proposed esDMD method is as effective as sDMD in capturing the dominant dynamics, but it achieves this in a fraction of the time required by sDMD.

The proposed esDMD opens the door for further enhancements and optimizations. For instance, the Gram-Schmidt orthonormalization could be replaced by more efficient techniques, such as modern variants of projection approximation subspace tracking \cite{Yang1995}. This change could greatly reduce computational overhead. The linear operator generated with esDMD could improve the prediction stage of the Koopman Kalman filter \cite{Netto2018} by introducing adaptability to systems characterized by time-varying dynamics, such as electric power grids.

\section*{Acknowledgments}
The authors are grateful to Prof. Yoshihiko Susuki from Kyoto University for his insightful comments on this work.

\bibliographystyle{IEEEtran}
\bibliography{lib}

\begin{thebibliography}{10}
\providecommand{\url}[1]{#1}
\csname url@samestyle\endcsname
\providecommand{\newblock}{\relax}
\providecommand{\bibinfo}[2]{#2}
\providecommand{\BIBentrySTDinterwordspacing}{\spaceskip=0pt\relax}
\providecommand{\BIBentryALTinterwordstretchfactor}{4}
\providecommand{\BIBentryALTinterwordspacing}{\spaceskip=\fontdimen2\font plus
\BIBentryALTinterwordstretchfactor\fontdimen3\font minus \fontdimen4\font\relax}
\providecommand{\BIBforeignlanguage}[2]{{%
\expandafter\ifx\csname l@#1\endcsname\relax
\typeout{** WARNING: IEEEtran.bst: No hyphenation pattern has been}%
\typeout{** loaded for the language `#1'. Using the pattern for}%
\typeout{** the default language instead.}%
\else
\language=\csname l@#1\endcsname
\fi
#2}}
\providecommand{\BIBdecl}{\relax}
\BIBdecl

\bibitem{Kutz2016}
J.~N. Kutz, S.~L. Brunton, B.~W. Brunton, and J.~L. Proctor, \emph{Dynamic mode decomposition: Data-driven modeling of complex systems}.\hskip 1em plus 0.5em minus 0.4em\relax SIAM, 2016.

\bibitem{Proctor2015}
J.~L. Proctor and P.~A. Eckhoff, ``Discovering dynamic patterns from infectious disease data using dynamic mode decomposition,'' \emph{Int. Health}, vol.~7, no.~2, pp. 139--145, 02 2015.

\bibitem{Brunton2016}
B.~W. Brunton, L.~A. Johnson, J.~G. Ojemann, and J.~N. Kutz, ``Extracting spatial–temporal coherent patterns in large-scale neural recordings using dynamic mode decomposition,'' \emph{J. Neurosci. Methods}, vol. 258, pp. 1--15, 2016.

\bibitem{Netto2021}
M.~Netto, Y.~Susuki, V.~Krishnan, and Y.~Zhang, ``On analytical construction of observable functions in extended dynamic mode decomposition for nonlinear estimation and prediction,'' \emph{IEEE Control Syst. Lett.}, vol.~5, no.~6, pp. 1868--1873, 2021.

\bibitem{Proctor2016}
J.~L. Proctor, S.~L. Brunton, and J.~N. Kutz, ``Dynamic mode decomposition with control,'' \emph{SIAM J. Appl. Dyn. Syst.}, vol.~15, no.~1, pp. 142--161, 2016.

\bibitem{Schmid2010}
P.~J. Schmid, ``Dynamic mode decomposition of numerical and experimental data,'' \emph{J. Fluid Mech.}, vol. 656, p. 5–28, 2010.

\bibitem{Colbrook2023}
M.~J. Colbrook, ``The multiverse of dynamic mode decomposition algorithms,'' in \emph{Numerical Analysis Meets Machine Learning}, ser. Handbook of Numerical Analysis, S.~Mishra and A.~Townsend, Eds.\hskip 1em plus 0.5em minus 0.4em\relax Elsevier, 2024, vol.~25, pp. 127--230.

\bibitem{vanDongen2020}
G.~van Dongen and D.~Van~den Poel, ``Evaluation of stream processing frameworks,'' \emph{IEEE Trans. Parallel Distrib. Syst.}, vol.~31, no.~8, pp. 1845--1858, 2020.

\bibitem{Hemati2014}
M.~S. Hemati, M.~O. Williams, and C.~W. Rowley, ``Dynamic mode decomposition for large and streaming datasets,'' \emph{Phys. Fluids}, vol.~26, no.~11, p. 111701, 2014.

\bibitem{Liew2022}
J.~Liew, T.~Göçmen, W.~H. Lio, and G.~C. Larsen, ``Streaming dynamic mode decomposition for short-term forecasting in wind farms,'' \emph{Wind Energy}, vol.~25, no.~4, pp. 719--734, 2022.

\bibitem{Abolmasoumi2022}
A.~Hossein~Abolmasoumi, M.~Netto, and L.~Mili, ``Robust dynamic mode decomposition,'' \emph{IEEE Access}, vol.~10, pp. 65\,473--65\,484, 2022.

\bibitem{Hemati2017}
M.~S. Hemati, C.~W. Rowley, E.~A. Deem, and L.~N. Cattafesta, ``De-biasing the dynamic mode decomposition for applied {K}oopman spectral analysis,'' \emph{Theor. Comput. Fluid Dyn.}, vol.~31, no.~4, pp. 349--368, 2017.

\bibitem{Hemati2016}
M.~Hemati, E.~Deem, M.~Williams, C.~W. Rowley, and L.~N. Cattafesta, ``Improving separation control with noise-robust variants of dynamic mode decomposition,'' in \emph{AIAA Aerospace Sciences Meeting}, 2016.

\bibitem{Brand2002}
M.~Brand, ``Incremental singular value decomposition of uncertain data with missing values,'' in \emph{Eur. Conf. on Computer Vision}, 2002, pp. 707--720.

\bibitem{Berry1992}
M.~W. Berry, ``Large-scale sparse singular value computations,'' \emph{Int. J. Supercomput. Appl.}, vol.~6, no.~1, pp. 13--49, 1992.

\bibitem{Matsumoto2017}
D.~Matsumoto and T.~Indinger, ``On-the-fly algorithm for dynamic mode decomposition using incremental singular value decomposition and total least squares,'' \emph{preprint arXiv:1703.11004}, 2017.

\bibitem{Zhang2019}
H.~Zhang, C.~W. Rowley, E.~A. Deem, and L.~N. Cattafesta, ``Online dynamic mode decomposition for time-varying systems,'' \emph{SIAM J. Appl. Dyn. Syst.}, vol.~18, no.~3, pp. 1586--1609, 2019.

\bibitem{Susuki2022}
Y.~Susuki, T.~Shimomura, Y.~Ota, and A.~Ishigame, ``Online {K}oopman mode decomposition for power system synchrophasor data,'' \emph{IFAC-PapersOnLine}, vol.~55, no.~9, pp. 54--58, 2022.

\bibitem{Alfatlawi2020}
M.~Alfatlawi and V.~Srivastava, ``An incremental approach to online dynamic mode decomposition for time-varying systems with applications to {EEG} data modeling,'' \emph{J. Comput. Dyn.}, vol.~7, no.~2, pp. 209--241, 2020.

\bibitem{Nedzhibov2023a}
G.~Nedzhibov, ``Online dynamic mode decomposition: An alternative approach for low rank datasets,'' \emph{Ann. Acad. Rom. Sci. Ser. Math. Appl}, vol.~15, pp. 229--249, 2023.

\bibitem{Tu2014}
J.~H. Tu, C.~W. Rowley, D.~M. Luchtenburg, S.~L. Brunton, and J.~N. Kutz, ``On dynamic mode decomposition: Theory and applications,'' \emph{J. Comput. Dyn.}, vol.~1, no.~2, pp. 391--421, 2014.

\bibitem{kuramotoPy}
F.~Damicelli, ``Python implementation of the {K}uramoto model,'' \url{https://github.com/fabridamicelli/kuramoto}, 2019.

\bibitem{Yang1995}
B.~Yang, ``Projection approximation subspace tracking,'' \emph{IEEE Trans. Signal Process.}, vol.~43, no.~1, pp. 95--107, 1995.

\bibitem{Netto2018}
M.~Netto and L.~Mili, ``A robust data-driven {K}oopman {K}alman filter for power systems dynamic state estimation,'' \emph{IEEE Trans. Power Syst.}, vol.~33, no.~6, pp. 7228--7237, 2018.

\end{thebibliography}

\end{document}